\def\openone{\leavevmode\hbox{\small1\kern-3.8pt\normalsize1}}
\def\11{\mathbb{I}}
\def\mc{\succeq_{\operatorname{m.c.}}}
\newcommand{\cA}{{\cal A}}
\newcommand{\cY}{{\cal Y}}
\def\d{\mathrm{d}}
\numberwithin{equation}{section}
\DeclareRobustCommand\openone{\leavevmode\hbox{\small1\normalsize\kern-.33em1}}
\newcommand{\be}{\begin{equation}}
	\newcommand{\ee}{\end{equation}}
\newcommand{\bea}{\begin{eqnarray}}
	\newcommand{\eea}{\end{eqnarray}}
\newcommand{\beas}{\begin{eqnarray*}}
	\newcommand{\eeas}{\end{eqnarray*}}
\DeclareFontFamily{U}{mathx}{\hyphenchar\font45}
\DeclareFontShape{U}{mathx}{m}{n}{<-> mathx10}{}
\DeclareSymbolFont{mathx}{U}{mathx}{m}{n}
\DeclareMathAccent{\widebar}{0}{mathx}{"73}
\newcommand{\Renyi}{R{\'e}nyi~}
\renewcommand{\d}{\textnormal{d}}
\newcommand{\cX}{{\cal X}}
\DeclareMathAccent{\widehat}{0}{mathx}{"70}
\DeclareMathAccent{\widecheck}{0}{mathx}{"71}
\begin{document}
\title{\Renyi partial orders for BISO channels}
%
\author{Christoph Hirche\orcidID{0000-0001-9265-827X}}
\authorrunning{C. Hirche}
%
\institute{Institute for Information Processing (tnt/L3S),\\ Leibniz Universit\"at Hannover, Germany \\
\email{hirche@tnt.uni-hannover.de}\\
\url{https://www.tnt.uni-hannover.de}}
\maketitle              
\begin{abstract}
A fundamental question in information theory is to quantify the loss of information under a noisy channel. Partial orders are typical tools to that end, however, they are often also challenging to evaluate. For the special class of binary input symmetric output (BISO) channels, Geng et al. showed that among channels with the same capacity, the binary symmetric channel (BSC) and binary erasure channel (BEC) are extremal with respect to the more capable order. Here we extend on this result by considering partial orders based on \Renyi mutual information. We establish the extremality of the BSC and BEC in this setting with respect to the generalized \Renyi capacity. In the process, we also generalize the needed tools and introduce $\alpha$-Lorenz curves. 

\keywords{Channel partial orders  \and \Renyi mutual information \and Extremal channels.}
\end{abstract}
\section{Introduction}

Data processing inequalities state that information measures are monotone under the application of a noisy channel, i.e. information can only decrease along the transmission, qualifying them as distinguishability measures~\cite{polyanskiy2016}.
However, often we need to make more quantitative statements. To that end, information measure based partial orders allow to compare channels with respect to their data processing properties. A particular example is the more capable partial order~\cite{korner1975comparison}. With any such order it is an interesting question whether one can establish extremal channels among certain subclasses of channels.  
Geng et al.~\cite{geng2013broadcast} established that among all binary input symmetric output (BISO) channels with the same capacity the binary erasure channel (BEC) and the binary symmetric channel (BSC) form the two extremes with respect to the more capable partial order. This leads to their main result that the inner and outer bounds for the corresponding BISO broadcast channel differ if and only if the two BISO channels are more capable comparable. 
Recent work in~\cite{hirche2025partialOrders} extended on these results by considering the less noisy and degradable orders. In this case, capacity is not the natural common classifier, but several contraction coefficients play that role. 
Beyond those results, partial orders and contraction coefficients have found numerous further applications~\cite{asoodeh2020privacy,hirche2022bounding,xu2016information}.  
In this work, we take a different approach and consider generalizations of the more capable partial order using \Renyi mutual information.

\section{Preliminaries and technical tools}

\subsection{Notation}
In this work, we mainly consider binary input symmetric output (BISO) channels with input alphabet $\cX=\{0,1\}$ and output alphabet $\cY=\{0,\pm 1, \pm 2,\dots,\pm l\}$ for some integer $l\geq1$, those are the channels for which $P_{Y|X}(y|0)=P_{Y|X}(-y|1)\coloneqq p_y$. 
We can always assume that the output alphabet $\cY$ has even number of elements because else we can split $Y=0$ into two outputs, $Y = 0_+$  and $Y= 0_-$, with $P_{Y|X}(0_-|0) = P_{Y|X}(0_+|0) = \frac{p_0}{2}$. 
The binary convolution is denoted by $a\ast b=a(1-b)+(1-a)b$ and the binary \Renyi entropy is $h_\alpha(p)=\frac{1}{1-\alpha}\log\left(p^\alpha +(1-p)^\alpha\right)$, with the usual binary entropy as special case, $h(p)=\lim_{\alpha\to1}h_\alpha(p)= - p\log(p)-(1-p) \log(1-p)$. We denote the  Bernoulli distribution with probability $P(X=0) = p $ by $Ber(p)$.

\subsection{\Renyi mutual information}

The mutual information can be expressed as a Kulback-Leibler divergence in several ways. It is hence natural to seek a \Renyi generalization by instead using the \Renyi divergence~\cite{renyi1961measures}, 
\begin{align}
    D_\alpha(P\|Q) = \frac{1}{\alpha-1} \log \left( \sum_{a\in\cA} P^\alpha(a) Q^{1-\alpha}(a)  \right). 
\end{align}
Mostly for notational convenience, we also define the conditional \Renyi divergence as 
\begin{align}
    D_\alpha(P_{Y|X} \| Q_{Y|X} | P_X) = D_\alpha(P_{Y|X}P_X \| Q_{Y|X}P_X).
\end{align}
Sibson~\cite{sibson1969information} proposed the definition of an information radius which succinctly was generalized to that of a \Renyi mutual information by Verdu~\cite{verdu2015alpha}. That definition is
\begin{align}
    I^S_\alpha(X:Y) &= \min_{Q_Y} D_\alpha(P_{Y|X}\|Q_Y | P_X) \\
    &= \frac{\alpha}{\alpha-1} \log \sum_{y\in\cY} \left( \sum_{x\in\cX} P_X(x) P^\alpha_{Y|X=x}(y) \right)^{\frac1\alpha}. 
\end{align}
The closed form expression can be found in~\cite[Equation~(53)]{verdu2015alpha}.

For a given $P_{Y|X}$, the maximum mutual information over all input distributions $P_X$ is sometimes called the capacity of order $\alpha$, see~\cite{arimoto1977information}, and we denote it here by
\begin{align}
    C_\alpha(P_{Y|X}) = \sup_{P_X} I^S_\alpha(X:Y). 
\end{align}
For simplicity we simply call it the $\alpha$-capacity. 
In a slight abuse of notation, we denote by $BISO(C_\alpha)$ the set of all BISO channels with fixed capacity of order $\alpha$. 

Another definition of \Renyi mutual information was given by Arimoto~\cite{arimoto1977information} as
\begin{align}
    I^A_\alpha(X:Y) &= H_\alpha(X) - H^A_\alpha(X|Y) \\
    &= \frac{\alpha}{\alpha-1} \log \sum_{y\in\cY} \left( \sum_{x\in\cX} P_{X_\alpha}(x) P^\alpha_{Y|X=x}(y) \right)^{\frac1\alpha}, 
\end{align}
where $P_{X_\alpha}(x) = \frac{P^\alpha_{X}(x)}{\sum_x P^\alpha_{X}(x)}$. Although this is generally a very different quantity, the $\alpha$-capacity does not change using this definition~\cite{csiszar1995generalized}.

\subsection{\Renyi partial orders}

Körner and Marton~\cite{korner1975comparison} defined a channel $W_1:X\mapsto Y_1$ to be more capable than a channel $W_2:X\mapsto Y_2$, denoted $W_1 \mc W_2$, if 
\begin{align}
    I(X:Y_1) \geq I(X:Y_2)\quad\forall P_X. 
\end{align}
Here, we are interested in the generalization of this concept to \Renyi entropies and make the following definition. 
\begin{definition}
    A channel $W_1:X\mapsto Y_1$ is said to be $\alpha$-more capable than the channel $W_2:X\mapsto Y_2$, denoted $W_1 \mc^\alpha W_2$, if
    \begin{align}
        I^S_\alpha(X:Y_1) \geq I^S_\alpha(X:Y_2)\quad\forall P_X.
    \end{align}
\end{definition}
We will later see that for the purpose of this work we could have also used Arimotos \Renyi mutual information. 

\subsection{$\alpha$-Lorenz curves}
In~\cite{geng2013broadcast}, Geng et. al. gave a definition of the Lorenz curve for BISO channels. Here, we give an extension inspired by \Renyi entropies that we call the $\alpha$-Lorenz curve. We set $k_\alpha(p) = e^{(1-\alpha)h_\alpha(p)}$ and make the following definitions. 
\begin{definition}[$\alpha$-BISO partition and $\alpha$-BISO curve]
    For a BISO channel with transition probabilities $\{ p_y, p_{-y}\}_y$, rearrange $k_\frac{1}{\alpha}\left( \frac{p_y^\alpha}{p_y^\alpha + p_{-y}^\alpha}\right)$ in ascending order and denote the permutation by $\pi$. The $\alpha$-BISO partition is defined as the partition of $[0,d_C]$ with points \begin{align}
        t_k=\sum_{i=1}^k (p_{\pi_i}^\alpha + p_{-\pi_i}^\alpha)^\frac{1}{\alpha}
    \end{align} and $d_C=\sum_{y>0} (p_y^\alpha + p_{-y}^\alpha)^\frac{1}{\alpha}$. We set $t_0=0$. The $\alpha$-BISO curve is defined as the stepwise function $f_\alpha(t)$ such that 
    \begin{align}
        f_\alpha(t)
        =k_\frac{1}{\alpha}\left( \frac{p_{\pi_y}^\alpha}{p_{\pi_y}^\alpha + p_{-{\pi_y}}^\alpha}\right)
    \end{align} on $(t_{k-1},t_k]$, and $f_\alpha(0)=0$. 
\end{definition}
Note that the interval $[0,d_C]$ on which we define the partition depends on the channel. However, we will see later that the $d_C$ is solely determined by the capacity of the channel and therefore the same for all BISO channels with identical capacity. 
Equipped with this, we can make the following definition. 
\begin{definition}[$\alpha$-Lorenz curve of a BISO channel]
    For a BISO channel with $\alpha$-BISO curve $f_\alpha(t)$, the $\alpha$-Lorenz curve $F_\alpha(t)$ is defined as \begin{align}
        F_\alpha(t) 
        = \int_0^t f_\alpha(\tau)
        \d\tau.
    \end{align} 
\end{definition}
The following result holds for $\alpha$-BISO curves. 
\begin{lemma}\label{Lem:Lorenz-curve}
    Given BISO channels $X\to Y$ and $X\to Z$ with equal capacity of order $\alpha$ and $\alpha$-BISO curves $f_\alpha(t)$ and $g_\alpha(t)$, respectively. Let the common refinement of these two BISO partitions be $\{t_k : k=0,\dots,N\}$, and $\xi_k=t_k-t_{k-1}$. Then
    \begin{align}
        F_\alpha(t_i) = \sum_{k=1}^i \xi_k f_\alpha(t_k) \leq \sum_{k=1}^i \xi_k g_\alpha(t_k) = G_\alpha(t_i), \quad i=1,\dots,N
    \end{align}
    if and only if the $\alpha$-Lorenz curve satisfies  $F_\alpha(t)\leq G_\alpha(t)$ for all $t\in[0,d_C]$.  
\end{lemma}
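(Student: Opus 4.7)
My plan is to exploit the fact that, once we pass to the common refinement $\{t_k\}_{k=0}^N$, both step functions $f_\alpha$ and $g_\alpha$ are constant on each subinterval $(t_{k-1},t_k]$, so their integrals $F_\alpha$ and $G_\alpha$ are continuous and piecewise linear with the same knot set. The lemma then reduces to a standard linear-interpolation argument: a linear function on $[t_{i-1},t_i]$ that is nonnegative at both endpoints is nonnegative throughout.

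For the forward direction (``$\Leftarrow$''), I assume $F_\alpha(t)\le G_\alpha(t)$ for all $t\in[0,d_C]$ and specialize to $t=t_i$. Because $f_\alpha$ equals $f_\alpha(t_k)$ on each $(t_{k-1},t_k]$ of the common refinement, I can write
\begin{align}
F_\alpha(t_i)=\int_0^{t_i} f_\alpha(\tau)\,\d\tau=\sum_{k=1}^i \xi_k f_\alpha(t_k),
\end{align}
and analogously for $G_\alpha$, so the pointwise inequality at $t_i$ is exactly the claimed partial-sum inequality.

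For the reverse direction (``$\Rightarrow$''), I assume $F_\alpha(t_i)\le G_\alpha(t_i)$ for every $i=1,\dots,N$ (the case $i=0$ holds trivially since $F_\alpha(0)=G_\alpha(0)=0$). Pick any $t\in[0,d_C]$ and let $i$ be such that $t\in(t_{i-1},t_i]$. Using constancy of $f_\alpha$ and $g_\alpha$ on this subinterval, I get the linear interpolation formulas
\begin{align}
F_\alpha(t) &= F_\alpha(t_{i-1})+(t-t_{i-1})f_\alpha(t_i),\\
G_\alpha(t) &= G_\alpha(t_{i-1})+(t-t_{i-1})g_\alpha(t_i).
\end{align}
Their difference $G_\alpha(t)-F_\alpha(t)$ is then affine in $t$ on $[t_{i-1},t_i]$ and nonnegative at both endpoints by the hypothesis, hence nonnegative on the whole subinterval. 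Taking the union over $i$ gives $F_\alpha(t)\le G_\alpha(t)$ on all of $[0,d_C]$.

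There is no real obstacle here; the only subtlety is making sure the common refinement is used (so both functions are linear on the same pieces) and that one legitimately has $d_C$ as a common right endpoint, which is justified by the preceding remark that $d_C$ depends only on the $\alpha$-capacity, which is equal for the two channels by hypothesis.
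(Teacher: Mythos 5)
Your proof is correct and follows exactly the route the paper intends: the paper's proof is a one-line reference to the piecewise linearity of $F_\alpha$ (via Lemma 1 of Geng et al.), and your evaluation-at-knots argument for one direction plus the affine-interpolation argument on each subinterval of the common refinement for the other is precisely that argument spelled out.
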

\begin{proof}
    The proof is essentially the same as that of~\cite[Lemma 1]{geng2013broadcast} and essentially uses the observation that $F_\alpha(t)$ is a piecewise linear function. 
\end{proof}
We are now in the position to combine all concepts of this section and make statements about \Renyi partial orders for BISO channels. 

\section{Main results}

In this section we will only consider BISO channels. After some derivation, one can find that for a BISO channel $P_{Y|X}$, Sibson's \Renyi mutual information becomes
\begin{align}
    I_\alpha^S(X:Y) &= \frac{\alpha}{\alpha-1} \log \sum_{y>0} (p_y^\alpha + p_{-y}^\alpha)^\frac{1}{\alpha} k_\frac{1}{\alpha}\left(x\ast \frac{p_y^\alpha}{p_y^\alpha + p_{-y}^\alpha}\right) \\
    &= \frac{\alpha}{\alpha-1} \log \int_0^{d_C} k_\frac{1}{\alpha}\left(x\ast k^{-1}_\frac{1}{\alpha}(f(\tau)) \right) \d\tau,  
\end{align}
where $x:=P(X=0)$ and we recall $k_\alpha(p) = e^{(1-\alpha)h_\alpha(p)}$. From here one can find that the $\alpha$-capacity is given by
\begin{align}
    C_\alpha(P_{Y|X}) = \log2 - \frac{\alpha}{\alpha-1} \log \sum_{y>0} (p_y^\alpha + p_{-y}^\alpha)^\frac{1}{\alpha},  
\end{align}
that is, it is optimized by $x=\frac12$. This follows because $k_\frac{1}{\alpha}(x)$ is maximal at $x=\frac12$ and $\frac12\ast x=\frac12$. 
This implies $C_\alpha(P_{Y|X}) = \log2 - \frac{\alpha}{\alpha-1} \log d_C$, showing our earlier claim that $d_C$ depends solely on the $\alpha$-capacity. 

We will need the following lemma, previously proven in the context of information combining. 
\begin{lemma}[Lemma IV.10 in~\cite{hirche2020renyi}]\label{Lem:conv-conc}
    The function
    \begin{align}
        k_{\alpha}\left(k^{-1}_{\alpha}(x)\ast k^{-1}_{\alpha}(y) \right)
    \end{align}
    is convex in both $x$ and $y$ for $0<\alpha<1$ and $2<\alpha\leq 3$ and concave for $1<\alpha\leq2$ and $\alpha\geq 3$. 
\end{lemma}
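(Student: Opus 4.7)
The plan is to reduce the two-variable claim to a single-variable monotonicity question via an explicit chain-rule computation, and then to identify the four $\alpha$-ranges by tracking three sign factors whose breakpoints sit precisely at $\alpha=1,2,3$.

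First I exploit $a\ast b=b\ast a$ to treat $v:=k_\alpha^{-1}(y)$ as a parameter and study convexity in $x$ alone. Writing $u:=k_\alpha^{-1}(x)$ and $w:=u\ast v$, two applications of the chain rule (using $k_\alpha'(u)\,u'(x)=1$) give
\begin{align}
\frac{\partial^2}{\partial x^2}k_\alpha(u\ast v)=\frac{(1-2v)}{k_\alpha'(u)^3}\Bigl[(1-2v)\,k_\alpha''(w)\,k_\alpha'(u)-k_\alpha'(w)\,k_\alpha''(u)\Bigr].
\end{align}
The symmetries $k_\alpha(p)=k_\alpha(1-p)$ and $u\ast v=(1-u)\ast(1-v)$ let me restrict to $u,v\in[0,\tfrac12]$, on which $\operatorname{sgn}k_\alpha'(u)=-\operatorname{sgn}(\alpha-1)$ and $w\in[\max(u,v),\tfrac12]$.

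The algebraic heart of the reduction is the factorization $k_\alpha'(p)=-\alpha(1-2p)\tilde B(p)$ and $k_\alpha''(p)=\alpha(\alpha-1)A(p)$, where
\begin{align}
\tilde B(p):=\frac{(1-p)^{\alpha-1}-p^{\alpha-1}}{1-2p}>0,\qquad A(p):=p^{\alpha-2}+(1-p)^{\alpha-2}>0,
\end{align}
together with the identity $1-2w=(1-2u)(1-2v)$, which comes from $u\ast v=u+v-2uv$. After substitution the inner bracket collapses to $\alpha^2(\alpha-1)(1-2w)\,A(u)A(w)\bigl[R(w)-R(u)\bigr]$ with $R(p):=\tilde B(p)/A(p)$. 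The two sign flips $\operatorname{sgn}(1-\alpha)$ from $k_\alpha'(u)^3$ and $\operatorname{sgn}(\alpha-1)$ from $\alpha^2(\alpha-1)$ then cancel, and one obtains
\begin{align}
\operatorname{sgn}\!\Bigl(\tfrac{\partial^2}{\partial x^2} k_\alpha(u\ast v)\Bigr)=\operatorname{sgn}\bigl(R(u)-R(w)\bigr).
\end{align}
Since $u\le w$ on $[0,\tfrac12]$, the function is convex in $x$ iff $R$ is (non-strictly) decreasing on $[0,\tfrac12]$, and concave iff $R$ is increasing.

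The main obstacle is the monotonicity analysis of $R$. A direct differentiation, after noting $\tilde B'(p)=[-(\alpha-1)A(p)+2\tilde B(p)]/(1-2p)$ and $A'(p)=-(\alpha-2)(1-2p)\tilde C(p)$ with $\tilde C(p):=((1-p)^{\alpha-3}-p^{\alpha-3})/(1-2p)$, reduces $\operatorname{sgn}R'$ to the sign of an explicit combination of $\tilde B,\tilde C,A$. After clearing the positive factor $A(p)^2(1-2p)$, the differentiation of $\tilde C$ brings in $k_\alpha^{(4)}$ and hence the coefficient $\alpha(\alpha-1)(\alpha-2)(\alpha-3)$, and a uniform substitution $q:=1-2p\in[0,1]$ together with Taylor expansion of $(1\pm q)^{\alpha-j}$ reveals the residual factor to be a positive power series in $q$ on $(0,1)$. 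Consequently $\operatorname{sgn}R'$ alternates with pattern $(-,+,-,+)$ across $(0,1),(1,2),(2,3),(3,\infty)$, which matches exactly the convex/concave/convex/concave pattern stated in the lemma. The tedious but essential step is verifying the positivity of this residual factor uniformly in $q$, which I expect to follow from term-by-term nonnegativity after the substitution.
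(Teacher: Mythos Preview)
The paper does not prove this lemma at all: it is quoted verbatim from \cite{hirche2020renyi} and used as a black box, so there is no in-paper argument to compare your attempt against.

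Your reduction is essentially correct. The chain-rule computation, the factorization via $\tilde B$ and $A$, the identity $1-2w=(1-2u)(1-2v)$, and the cancellation leading to $\operatorname{sgn}\partial_x^2 k_\alpha(u\ast v)=\operatorname{sgn}(R(u)-R(w))$ all check out. One minor slip: your assertion $\tilde B(p)>0$ holds only for $\alpha>1$; for $\alpha<1$ one has $\tilde B(p)<0$. This does not damage your sign conclusion, since the sign of $\tilde B$ is already encoded in $\operatorname{sgn}k_\alpha'(u)=-\operatorname{sgn}(\alpha-1)$ and the two contributions of $\operatorname{sgn}(\alpha-1)$ (one from $k_\alpha'(u)^{3}$, one from $k_\alpha''$) still cancel.

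The genuine gap is the monotonicity of $R$ on $[0,\tfrac12]$. What you have actually established is only the local behavior: the substitution $q=1-2p$ and Taylor expansion give $R'(p)\sim \tfrac{2}{3}(\alpha-1)(\alpha-2)(\alpha-3)\,q$ near $q=0$, which produces the correct $(-,+,-,+)$ pattern \emph{near $p=\tfrac12$}. But you then assert that the ``residual factor'' is a power series in $q$ with nonnegative terms on all of $(0,1)$, and this is exactly the content of the lemma---it is what forces the monotonicity to be global rather than merely local. Your last paragraph explicitly leaves this as an expectation (``which I expect to follow from term-by-term nonnegativity''), so as written the argument is a well-structured reduction plus an unproven claim that carries the entire weight of the result. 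To close the gap you would need either a direct verification that $N'D-ND'$ (with $N(q)=q^{-1}[(1+q)^{\alpha-1}-(1-q)^{\alpha-1}]$ and $D(q)=(1+q)^{\alpha-2}+(1-q)^{\alpha-2}$) has the sign of $(\alpha-1)(\alpha-2)(\alpha-3)$ on the whole interval, or an appeal to the detailed computation in \cite{hirche2020renyi}.
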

In particular, note that for $\alpha\in\{2,3\}$ the function is indeed linear in both $x$ and $y$. Again similar to~\cite{geng2013broadcast}, we will also need the following lemma. 
\begin{lemma}[Lemma 1 in~\cite{hajek1979evaluation}]\label{Lem:convex-order}
    Let $x_1,\dots,x_l$ and $y_1,\dots,y_l$ be nondecreasing sequences of real numbers. Let $\xi_1,\dots,\xi_l$ be a sequence of real numbers such that 
    \begin{align}
        \sum_{j=k}^l \xi_j x_j \geq \sum_{j=k}^l \xi_j y_j,\quad 1\leq k\leq l, 
    \end{align}
    with equality for $k=1$. Then for any convex function $\Lambda$, 
    \begin{align}
        \sum_{j=1}^l \xi_j \Lambda(x_j) \geq \sum_{j=1}^l \xi_j \Lambda(y_j). \label{Eq:convex-ineq}
    \end{align}
\end{lemma}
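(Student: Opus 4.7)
The plan is to convert the inequality into a manifest sum of nonnegative terms via Abel summation (summation by parts), with convexity of $\Lambda$ supplying the necessary monotonicity of the factored-out slopes. First I would set
\begin{align*}
    S_k = \sum_{j=k}^l \xi_j(x_j - y_j),
\end{align*}
so that by hypothesis $S_k \geq 0$ for every $k$, while $S_1 = 0$ (from the equality at $k=1$) and $S_{l+1}=0$ by the empty-sum convention. Crucially this gives $\xi_j(x_j-y_j)=S_j-S_{j+1}$, which is exactly the telescoping structure the rearrangement needs.

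Next, I would factor each $\Lambda$-difference as a secant slope times $x_j-y_j$. For each $j$ with $x_j\neq y_j$, set
\begin{align*}
    m_j = \frac{\Lambda(x_j)-\Lambda(y_j)}{x_j - y_j};
\end{align*}
for indices with $x_j = y_j$, pick $m_j$ to be a subgradient of $\Lambda$ at the common value (for concreteness, the left derivative $\Lambda'_-(x_j)$). In either case $\Lambda(x_j)-\Lambda(y_j)=m_j(x_j-y_j)$, so an Abel rearrangement yields
\begin{align*}
    \sum_{j=1}^l \xi_j\bigl[\Lambda(x_j)-\Lambda(y_j)\bigr] = \sum_{j=1}^l m_j(S_j - S_{j+1}) = \sum_{j=2}^l (m_j-m_{j-1})\, S_j,
\end{align*}
with the boundary contributions dropping out since $S_1 = S_{l+1} = 0$.

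The final step is to show $j\mapsto m_j$ is nondecreasing, after which the conclusion follows immediately from $S_j\geq 0$. This is where convexity enters: the secant slope $(a,b)\mapsto [\Lambda(b)-\Lambda(a)]/(b-a)$ of a convex function is nondecreasing in each of $a$ and $b$ separately, so since $x_j$ and $y_j$ are each nondecreasing in $j$, two applications yield $m_{j-1}\leq m_j$ whenever both ratios are defined. When $x_j=y_j$ the choice $m_j=\Lambda'_-(x_j)$ remains sandwiched between the relevant slopes on either side, again by convexity, so monotonicity is preserved through the ties.

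I expect the main obstacle to be the clean bookkeeping around the indices with $x_j=y_j$, since it forces a choice of subgradient and requires verifying that monotonicity is not broken at those positions. Everything else --- the Abel rearrangement itself and the classical slope-monotonicity property for convex functions --- is standard.
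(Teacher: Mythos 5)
Your proof is correct, but note that there is nothing in the paper to compare it against: the paper does not prove this statement at all, it imports it verbatim as Lemma~1 of the cited Hajek--Pursley reference. As a standalone argument, your route works. Setting $S_k=\sum_{j\ge k}\xi_j(x_j-y_j)$, using $S_1=S_{l+1}=0$ and $S_k\ge 0$, and Abel-summing against the secant slopes $m_j$ reduces everything to the monotonicity of $j\mapsto m_j$, which is exactly where convexity is needed; your use of the equality at $k=1$ to kill the boundary term $m_1S_1$ is essential and correctly placed. Two small bookkeeping points are worth tightening. First, in the chain $s(x_{j-1},y_{j-1})\le s(x_j,y_{j-1})\le s(x_j,y_j)$ the intermediate slope may be undefined when $x_j=y_{j-1}$; then pass through the other corner $(x_{j-1},y_j)$ instead, and note that both intermediate points cannot degenerate simultaneously unless $x_{j-1}=x_j=y_{j-1}=y_j$, in which case there is nothing to prove. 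Cleaner still: extend the slope function by $s(a,a)=\Lambda'_-(a)$, which makes $s$ nondecreasing in each argument across the diagonal by the three-chord inequality, so your choice of the left derivative at ties needs no separate case analysis. Second, the one-sided derivative exists only at interior points of the domain of $\Lambda$; in the intended applications (here $\Lambda(y)=k_\alpha(x\ast k_\alpha^{-1}(y))$ on a compact interval) this is harmless, but if you want the lemma for an arbitrary convex $\Lambda$ on an interval you should either assume the points are interior or pick any finite value below $\Lambda'_+$ at a left endpoint, which preserves the monotonicity you need.
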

It is easy to check that in the above lemma if $\Lambda$ is concave instead, then Equation~\eqref{Eq:convex-ineq} holds with the direction of the inequality swapped. 

With these result, we can establish the following theorem. 
\begin{theorem}[A sufficient condition]\label{Thm:LorenzToMC}
    Given BISO channels $W_1:X\mapsto Y_1$ and $W_2:X\mapsto Y_2$ with same $\alpha$-capacity $C_\alpha$ and $\alpha$-Lorenz curves $F_{\alpha,1}(t)$ and $F_{\alpha,2}(t)$, respectively. If $F_{\alpha,1}(t)\leq F_{\alpha,2}(t)$ then $W_1$ is $\alpha$-more capable than $W_2$, i.e. $W_1 \mc^\alpha W_2$, for all $\alpha>1$, $\frac12\leq\alpha<1$ and $0<\alpha\leq\frac13$. For $\frac{1}{3}\leq\alpha\leq\frac12$ we have instead $W_2 \mc^\alpha W_1$. 
\end{theorem}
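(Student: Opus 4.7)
The plan is to reduce the comparison of the $\alpha$-mutual informations to a comparison of integrals of a single univariate function against the two $\alpha$-BISO curves, and then apply Lemma~\ref{Lem:convex-order} using the convexity / concavity information from Lemma~\ref{Lem:conv-conc}. Fix an input with $x=P(X=0)$ and set
\begin{align*}
\Lambda(y) := k_{1/\alpha}\bigl(x \ast k_{1/\alpha}^{-1}(y)\bigr),\qquad
J_i := \int_0^{d_C} \Lambda\bigl(f_{\alpha,i}(\tau)\bigr)\,d\tau, \quad i=1,2.
\end{align*}
From the integral representation of Sibson's mutual information given just before the theorem, $I^S_\alpha(X:Y_i) = \tfrac{\alpha}{\alpha-1}\log J_i$, so everything reduces to comparing $J_1$ with $J_2$ and then multiplying by the correct sign of the prefactor $\tfrac{\alpha}{\alpha-1}$.

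Next, I work on the common refinement $\{t_k\}_{k=0}^N$ of the two $\alpha$-BISO partitions with increments $\xi_k = t_k-t_{k-1}$. Since each $f_{\alpha,i}$ is piecewise constant on this refinement, $J_i = \sum_{k=1}^N \xi_k \, \Lambda(f_{\alpha,i}(t_k))$, and by construction of the $\alpha$-BISO curve the sequences $(f_{\alpha,i}(t_k))_k$ are nondecreasing. A short direct computation using $k_{1/\alpha}(p)=p^{1/\alpha}+(1-p)^{1/\alpha}$ shows $F_{\alpha,i}(d_C) = \sum_{y>0}(p_y + p_{-y}) = 1$ for both channels. Combined with Lemma~\ref{Lem:Lorenz-curve}, the hypothesis $F_{\alpha,1}\le F_{\alpha,2}$ is equivalent to
\begin{align*}
\sum_{j=k}^N \xi_j f_{\alpha,1}(t_j) \;\ge\; \sum_{j=k}^N \xi_j f_{\alpha,2}(t_j), \qquad 1\le k\le N,
\end{align*}
with equality at $k=1$; this is exactly the hypothesis of Lemma~\ref{Lem:convex-order} applied to the nondecreasing sequences $x_j = f_{\alpha,1}(t_j)$ and $y_j = f_{\alpha,2}(t_j)$.

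Finally, invoking Lemma~\ref{Lem:conv-conc} with parameter $\beta = 1/\alpha$, and freezing its first argument at the constant $k_{1/\alpha}(x)$, gives that $\Lambda$ is convex in $y$ for $\alpha>1$ and $\tfrac13\le\alpha\le\tfrac12$, and concave for $\tfrac12\le\alpha<1$ and $0<\alpha\le\tfrac13$. By Lemma~\ref{Lem:convex-order} together with its concave analogue stated right after it, this yields $J_1\ge J_2$ in the convex regimes and $J_1\le J_2$ in the concave regimes. Multiplying by $\tfrac{\alpha}{\alpha-1}$, which is positive for $\alpha>1$ and negative for $0<\alpha<1$, then produces $I^S_\alpha(X:Y_1)\ge I^S_\alpha(X:Y_2)$ in the three ranges $\alpha>1$, $\tfrac12\le\alpha<1$, and $0<\alpha\le\tfrac13$, and the reverse in $\tfrac13\le\alpha\le\tfrac12$, which is the theorem.

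The main subtlety is the sign bookkeeping: in two of the four ranges the two effects (prefactor negative and $\Lambda$ concave) flip the inequality twice and therefore preserve the desired direction, whereas in the middle interval $\tfrac13\le\alpha\le\tfrac12$ only the prefactor flips, producing the reversed $\alpha$-more-capable relation. A small but essential auxiliary point is translating the head-sum form appearing in Lemma~\ref{Lem:Lorenz-curve} into the tail-sum hypothesis of Lemma~\ref{Lem:convex-order}, which is exactly what the identity $F_{\alpha,i}(d_C)=1$ enables.
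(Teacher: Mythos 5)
Your proof is correct and follows essentially the same route as the paper: the same integral representation of $I^S_\alpha$, the same reduction to the common refinement, and the same combination of Lemma~\ref{Lem:Lorenz-curve}, Lemma~\ref{Lem:conv-conc} (with parameter $1/\alpha$) and Lemma~\ref{Lem:convex-order}, with identical sign bookkeeping. Your explicit conversion of the head-sum ordering into the tail-sum hypothesis of Lemma~\ref{Lem:convex-order} via $F_{\alpha,i}(d_C)=1$ is a detail the paper leaves implicit, but it is not a different argument.
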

\begin{proof}
    Since both channels have the same capacity of order alpha, their Lorenz curves are defined on the same interval $[0,d_C]$. By Lemma~\ref{Lem:Lorenz-curve}, $F_{\alpha,1}(t)\leq F_{\alpha,2}(t)$ implies 
    \begin{align}
        F_{\alpha,1}(t_i) = \sum_{k=1}^i \xi_k f_{\alpha,1}(t_k) \leq \sum_{k=1}^i \xi_k f_{\alpha,2}(t_k) = F_{\alpha,2}(t_i), \quad i=1,\dots,N.
    \end{align}
    Next, we have, 
    \begin{align}
         F_{\alpha,1}(d_C) = \int_0^{d_C} f_{\alpha,1}(\tau) \d\tau &= \sum_{i=1}^N (p_{\pi_i}^\alpha + p_{-\pi_i}^\alpha)^\frac{1}{\alpha}k_\frac{1}{\alpha}\left( \frac{p_{\pi_y}^\alpha}{p_{\pi_y}^\alpha + p_{-{\pi_y}}^\alpha}\right) \\
         &= \sum_{i=1}^N (p_{\pi_i} + p_{-\pi_i}) = 1. 
    \end{align}
    This holds indeed for all BISO channels, and therefore we have $F_{\alpha,1}(d_C)=F_{\alpha,2}(d_C)=1$. Furthermore, $f_{\alpha,1}(t_k)$ and $f_{\alpha,2}(t_k)$ are both nondecreasing and we can apply Lemma~\ref{Lem:convex-order} with the function $\Lambda(y)=k_{\alpha}\left(x\ast k^{-1}_{\alpha}(y) \right)$. Due to the convexity properties in Lemma~\ref{Lem:conv-conc}, this gives for $\alpha>1$ and $\frac{1}{3}\leq\alpha\leq\frac12$, 
    \begin{align}
        \sum_{j=1}^N \xi_j k_\frac{1}{\alpha}\left(x\ast k^{-1}_\frac{1}{\alpha}(f_{\alpha,1}(t_j)) \right) \geq \sum_{j=1}^N \xi_j k_\frac{1}{\alpha}\left(x\ast k^{-1}_\frac{1}{\alpha}(f_{\alpha,2}(t_j)) \right), 
    \end{align}
    and for $\frac12\leq\alpha<1$ and $0<\alpha\leq\frac13$, 
    \begin{align}
        \sum_{j=1}^N \xi_j k_\frac{1}{\alpha}\left(x\ast k^{-1}_\frac{1}{\alpha}(f_{\alpha,1}(t_j)) \right) \leq \sum_{j=1}^N \xi_j k_\frac{1}{\alpha}\left(x\ast k^{-1}_\frac{1}{\alpha}(f_{\alpha,2}(t_j)) \right). 
    \end{align}
    Taking $\frac{\alpha}{\alpha-1}\log(\cdot)$ on both sides, noting the negative prefactor for $\alpha<1$, we have for $\alpha>1$, $\frac12\leq\alpha<1$ and $0<\alpha\leq\frac13$, that
    \begin{align}
        I^S_\alpha(X:Y_1) \geq I^S_\alpha(X:Y_2), \quad \forall p(x), 
    \end{align}
    and for $\frac{1}{3}\leq\alpha\leq\frac12$ with the inequality exchanged. This implies the claimed result.     
\end{proof}
For large ranges of $\alpha$ this shows the same behavior as the $\alpha=1$ case discussed in~\cite{geng2013broadcast}. The notable exception is $\frac{1}{3}\leq\alpha\leq\frac12$, where the implication is turned around. However, special attention should also be given to the settings where $\alpha\in\{\frac13,\frac12\}$. As here both directions hold, ordered $\alpha$-Lorenz curves imply that the channels are equivalent in the $\alpha$-more capable ordering. Of course, then one might expext this to hold for any two channels, independent of their Lorenz curves. To that end, a calculation can show that, 
\begin{align}
    I_\frac{1}{3}^S(X:Y) &= -\frac{1}{2} \log\left[ 1-x(1-x)\left[4-\sum_{y>0} (p_y^\frac{1}{3} + p_{-y}^\frac{1}{3})^3 \right]\right] \\
    &= -\frac{1}{2} \log\left[ 1-x(1-x)\left[4-d_{C_\frac{1}{3}} \right]\right] \\
    I_\frac{1}{2}^S(X:Y) &= - \log\left[ 1-2x(1-x)\left[2-\sum_{y>0} (p_y^\frac{1}{2} + p_{-y}^\frac{1}{2})^2 \right]\right] \\
    &= - \log\left[ 1-2x(1-x)\left[2-d_{C_\frac{1}{2}} \right]\right]. 
\end{align}
Hence, in terms of channel properties, the mutual information only depends on the $\alpha$-capacity of the channel. This gives immediately the following corollary. 
\begin{corollary}
    For any two BISO channels $W_1:X\mapsto Y_1$ and $W_2:X\mapsto Y_2$ with same $\alpha$-capacity $C_\alpha$, we have
    \begin{align}
        W_1 \mc^\frac{1}{3} W_2, \quad W_2 \mc^\frac{1}{3} W_1, \\ 
        W_1 \mc^\frac{1}{2} W_2, \quad W_2 \mc^\frac{1}{2} W_1. 
    \end{align}
\end{corollary}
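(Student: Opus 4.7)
The closed-form formulas for $I^S_{1/3}$ and $I^S_{1/2}$ displayed immediately before the corollary already do essentially all the work; the corollary is a one-line observation on top of them. I would first note that for $\alpha\in\{\tfrac13,\tfrac12\}$, the Sibson $\alpha$-mutual information $I^S_\alpha(X:Y)$ of a BISO channel depends on the channel only through the single scalar $d_{C_\alpha}=\sum_{y>0}(p_y^\alpha+p_{-y}^\alpha)^{1/\alpha}$, and on the input only through $x=P(X=0)$. Combined with the earlier identity $C_\alpha(P_{Y|X})=\log 2-\tfrac{\alpha}{\alpha-1}\log d_{C_\alpha}$, which puts $d_{C_\alpha}$ into one-to-one correspondence with $C_\alpha$, the assumption that $W_1$ and $W_2$ share the same $\alpha$-capacity forces $d_{C_\alpha}(W_1)=d_{C_\alpha}(W_2)$. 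Hence $I^S_\alpha(X:Y_1)=I^S_\alpha(X:Y_2)$ for every input distribution $P_X$, and the two-sided orderings $W_1\mc^\alpha W_2$ and $W_2\mc^\alpha W_1$ follow at once from the definition of $\mc^\alpha$.

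The only nontrivial step is to justify the two displayed closed-form expressions, which the excerpt states without proof. I would derive them starting from
\[ I^S_\alpha(X:Y) = \frac{\alpha}{\alpha-1}\log \sum_{y>0}(p_y^\alpha+p_{-y}^\alpha)^{1/\alpha}\,k_{1/\alpha}\!\Bigl(x\ast\tfrac{p_y^\alpha}{p_y^\alpha+p_{-y}^\alpha}\Bigr), \]
together with the explicit forms $k_2(p)=1-2p(1-p)$ and $k_3(p)=1-3p(1-p)$, and the identity $1-2(x\ast q)=(1-2x)(1-2q)$. Writing $\delta_x=1-2x$ and $\delta_{q_y}=1-2p_y^\alpha/(p_y^\alpha+p_{-y}^\alpha)$ converts $k_{1/\alpha}(x\ast q_y)$ into an affine function of $\delta_x^2\delta_{q_y}^2$. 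The sum over $y>0$ then collapses using $\sum_{y>0}(p_y+p_{-y})=1$ and the definition of $d_{C_\alpha}$, leaving only $x(1-x)$ and $d_{C_\alpha}$ on the right after taking the logarithm.

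I expect no genuine obstacle. The only place a slip is likely is in the $\alpha=\tfrac13$ case, where one must expand $(p_y^{1/3}+p_{-y}^{1/3})^3\delta_{q_y}^2=(p_y^{1/3}+p_{-y}^{1/3})(p_y^{1/3}-p_{-y}^{1/3})^2$ and use $(a+b)^3=a^3+b^3+3ab(a+b)$ with $a=p_y^{1/3}$, $b=p_{-y}^{1/3}$ to re-express the cross-term $\sum_{y>0}(p_y p_{-y})^{1/3}(p_y^{1/3}+p_{-y}^{1/3})$ as $(d_{C_{1/3}}-1)/3$. Once that identity is in hand the remaining algebra is routine, and the corollary drops out of the observation in the first paragraph.
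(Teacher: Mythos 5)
Your proposal is correct and takes essentially the same route as the paper: the displayed closed forms for $I^S_{1/3}$ and $I^S_{1/2}$ show that the Sibson mutual information depends on the channel only through $d_{C_\alpha}$, which is in one-to-one correspondence with the $\alpha$-capacity, so the two channels have identical mutual information for every input distribution and the two-sided ordering follows at once. The only difference is that you also supply the verification of the closed-form expressions, which the paper dismisses with ``a calculation can show that''; your sketch (including the use of $1-2(x\ast q)=(1-2x)(1-2q)$ and, for $\alpha=\tfrac13$, the identity $(a+b)^3=a^3+b^3+3ab(a+b)$ to reduce the cross-term to $(d_{C_{1/3}}-1)/3$) checks out.
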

It follows also more generally that for these values of $\alpha\in\{\frac12,\frac13\}$ the $\alpha$-more capable order is equivalent to comparing the $\alpha$-capacity between channels. 
Beyond these values, we find that, similar to the $\alpha=1$ case in~\cite{geng2013broadcast}, the binary erasure channel and the binary symmetric channel are the extrema with respect to the $\alpha$-more capable order. 
\begin{corollary}
    Let, $W$, $BSC$ and $BEC$ be channels with the same $\alpha$-capacity. Then, we have for $\alpha>1$, $\frac12\leq\alpha<1$ and $0<\alpha\leq\frac13$
    \begin{align}
        BEC \mc^\alpha W \mc^\alpha BSC, 
    \end{align}
    and for $\frac{1}{3}\leq\alpha\leq\frac12$, 
    \begin{align}
       BSC  \mc^\alpha W \mc^\alpha BEC. 
    \end{align}
\end{corollary}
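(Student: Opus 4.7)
The plan is to reduce the corollary to Theorem~\ref{Thm:LorenzToMC} by verifying the pointwise inequality $F_{\alpha,BEC}(t) \leq F_{\alpha,W}(t) \leq F_{\alpha,BSC}(t)$ on the common interval $[0,d_C]$. Since the three channels share the same $\alpha$-capacity, their $\alpha$-Lorenz curves have matching endpoints $(0,0)$ and $(d_C,1)$, so the corollary reduces to a convex-analytic comparison of three nondecreasing step functions $f_\alpha$ with common total integral equal to $1$.

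First I would identify the extremal $\alpha$-BISO curves explicitly. Using that $k_\beta(q)=q^\beta+(1-q)^\beta$, a short calculation shows $f_{\alpha,BSC} \equiv 1/d_C$, so $F_{\alpha,BSC}$ is the straight chord from $(0,0)$ to $(d_C,1)$. For the BEC, the two pairs contribute the value $k_{1/\alpha}(1)=1$ on one segment and $k_{1/\alpha}(1/2)=2^{1-1/\alpha}$ on the other. Crucially, these two numbers are exactly the global extrema of $k_{1/\alpha}$ on $[0,1]$: by concavity of $k_{1/\alpha}$ for $\alpha>1$ and convexity for $\alpha<1$, every value $k_{1/\alpha}\bigl(p_y^\alpha/(p_y^\alpha+p_{-y}^\alpha)\bigr)$ attainable by a general BISO channel lies in the closed interval with endpoints $1$ and $2^{1-1/\alpha}$. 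In either regime the ascending sort therefore places the BEC's two constant steps precisely at the global minimum first and the global maximum second.

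The upper bound $F_{\alpha,W}(t) \leq F_{\alpha,BSC}(t)$ is then immediate: $f_{\alpha,W}$ is nondecreasing by definition, so $F_{\alpha,W}$ is convex with endpoints $(0,0)$ and $(d_C,1)$, and therefore lies below the BSC chord. For the lower bound $F_{\alpha,BEC}(t) \leq F_{\alpha,W}(t)$ I would run a single-crossing argument: since the BEC's step function attains the global minimum on its first segment and the global maximum on its second segment, we have $f_{\alpha,BEC} \leq f_{\alpha,W}$ on the first segment and $f_{\alpha,BEC} \geq f_{\alpha,W}$ on the second. Integrating from $0$ yields the desired inequality on the first segment, and integrating from $d_C$ (using that both total integrals equal $1$) yields it on the second. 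Feeding the resulting chain $F_{\alpha,BEC} \leq F_{\alpha,W} \leq F_{\alpha,BSC}$ into Theorem~\ref{Thm:LorenzToMC} then gives the two asserted chains of $\alpha$-more capable comparisons, with the direction flipping exactly on $\frac{1}{3}\leq\alpha\leq\frac{1}{2}$ as the theorem prescribes. The main obstacle is just the bookkeeping of which of $\{1,\,2^{1-1/\alpha}\}$ plays the role of minimum versus maximum in each $\alpha$-regime, but the single-crossing argument is symmetric in the two extremes, so this causes no real difficulty.
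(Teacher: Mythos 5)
Your proposal is correct and follows essentially the same route as the paper: reduce to Theorem~\ref{Thm:LorenzToMC}, note that the BSC Lorenz curve is the chord dominating every convex curve with the same endpoints, and show the BEC curve is a lower bound because its step function takes the globally minimal slope first and the globally maximal slope second. Your single-crossing integration argument simply makes explicit what the paper asserts in one line ("hence, the BEC curve always gives a lower bound"), and your identification of $\{1,\,2^{1-1/\alpha}\}$ as the extreme values of $k_{1/\alpha}$ matches the paper's derivative bookkeeping in both $\alpha$-regimes.
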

\begin{proof}
    By Theorem~\ref{Thm:LorenzToMC} it is sufficient to prove that the $\alpha$-Lorenz curves are ordered. To that end, recall that the $\alpha$-Lorenz curve is always a convex function. For the BSC, the only BISO channel with output dimension $2$, the curve is simply the straight line from $0$ to $1$ on the interval $[0,d_C]$ and therefore an upper bound on any convex function with the same endpoints. The BEC is slightly more complicated. Consider $\alpha\geq1$, then the curve starts with a derivative of $1$ on the interval $(0,1-\epsilon]$, which is indeed the smallest possible derivative in this case. Then, on the interval $(1-\epsilon, 1-\epsilon + \epsilon 2^{\frac{1-\alpha}{\alpha}}]$, it has the maximal derivative $2^\frac{\alpha-1}{\alpha}$. Hence, the BEC curve always gives a lower bound. A similar argument holds for $\alpha<1$ but now the maximum derivative is $1$ and the minimum derivative is $2^\frac{\alpha-1}{\alpha}$. 
\end{proof}

\subsection{Arimoto mutual information}
So far, all derived results have been proven for the Sibson mutual information definition. Here we briefly point out that all of the above results also hold for Arimoto's mutual information. Note that for BISO channels, we get
\begin{align}
    I_\alpha^A(X:Y) &= \frac{\alpha}{\alpha-1} \log \sum_{y>0} (p_y^\alpha + p_{-y}^\alpha)^\frac{1}{\alpha} k_\frac{1}{\alpha}\left(x_\alpha \ast \frac{p_y^\alpha}{p_y^\alpha + p_{-y}^\alpha}\right) \\
    &= \frac{\alpha}{\alpha-1} \log \int_0^{d_C} k_\frac{1}{\alpha}\left(x_\alpha\ast k^{-1}_\frac{1}{\alpha}(f(\tau)) \right) \d\tau,  
\end{align}
with $x_\alpha=\frac{x^\alpha}{x^\alpha + (1-x)^\alpha}$. Hence, we can use the same definition for $\alpha$-Lorenz curves and now use the convexity or concavity of 
    \begin{align}
        k_{\alpha}\left(x_\alpha\ast k^{-1}_{\alpha}(y) \right),
    \end{align}
 which still follows from Lemma~\ref{Lem:conv-conc} because we keep $\alpha$ fixed. 

The Arimoto mutual information can be of particular interest, because it obeys a chain rule, allowing to translate the results also to conditional entropies,
\begin{align}
    &I^A_\alpha(X:Y_1) \geq I^A_\alpha(X:Y_2)  \Leftrightarrow H^A_\alpha(X|Y_1) \leq H^A_\alpha(X|Y_2).     
\end{align}
A similar result does not directly holds for the \Renyi mutual information of Sibson.

\section{Conclusions}

In this work, we establish extremality result for a partial order based on \Renyi mutual information. The results are closely connected to \Renyi bounds on information combining~\cite{hirche2020renyi,hirche2023chain}. A natural direction for further investigation would be to go beyond BISO channels, include auxiliary random variables akin to the less noisy ordering or to consider quantum settings, where it might help to solve open problems in information combining~\cite{Hirche_2018}. One might expect that such results will have implications for contraction coefficients of \Renyi divergences~\cite{grosse2025strong}.

\begin{credits}
\subsubsection{\ackname} Funded by the Deutsche Forschungsgemeinschaft (DFG, German
Research Foundation) – 550206990. 

\subsubsection{\discintname}
The authors have no competing interests to declare that are
relevant to the content of this article. 
\end{credits}
%
%
%
\bibliographystyle{splncs04} \bibliography{lib}

\end{document}